\documentclass[a4paper, abstract=true, 10pt, DIV=10]{scrartcl}

\usepackage{etex}
\usepackage[utf8]{inputenc}
\usepackage[T1]{fontenc}
\usepackage{lmodern}
\usepackage[english]{babel}
\usepackage{textcomp}
\usepackage{enumitem}
\usepackage{microtype}
\usepackage{graphicx}
\usepackage{array}

\usepackage{amsmath}
\usepackage{amssymb}
\usepackage{amsthm}
\usepackage{amstext}

\usepackage{tikz}
\usetikzlibrary{math}
\usetikzlibrary{calc}
\usetikzlibrary{arrows.meta}
\usetikzlibrary{backgrounds}

\theoremstyle{plain}
\newtheorem{lemma}{Lemma}[section]

\newtheorem{theorem}[lemma]{Theorem}

\theoremstyle{definition}
\newtheorem{definition}[lemma]{Definition}
\theoremstyle{remark}

\usepackage[linesnumbered, noline, ruled, longend]{algorithm2e}

\usepackage[labelformat=simple]{subcaption}

\newcommand{\blocking}[1][\delta]{\text{Blocking}_{#1}}

\setlength{\parindent}{0pt}

\begin{document}

\tikzstyle{vertex} = [circle, draw, fill=black, inner sep=0pt, minimum width=3pt]
\tikzstyle{boundary} = [circle, draw, fill=white, inner sep=0pt, minimum width=3.5pt]
\tikzstyle{weight} = [font=\small, auto]
\tikzstyle{continued} = [loosely dotted, thick]
\tikzstyle{multiple} = [dashed]
\tikzstyle{arrow} = [-{Stealth[length=2mm]}]

\title{Online Graph Exploration on Trees, Unicyclic Graphs and Cactus Graphs}
\author{Robin Fritsch\thanks{\textsc{Technical University of Munich}, \emph{Email:} \texttt{robin.fritsch@tum.de}}}
\date{}

\maketitle

\begin{abstract}
We study the problem of exploring all vertices of an undirected weighted graph that is initially unknown to the searcher.
An edge of the graph is only revealed when the searcher visits one of its endpoints.
Beginning at some start node, the searcher's goal is to visit every vertex of the graph before returning to the start node on a tour as short as possible.

We prove that the Nearest Neighbor algorithm's competitive ratio on trees with $n$ vertices is $\Theta(\log n)$, i.e. no better than on general graphs.
Furthermore, we examine the algorithm Blocking for a range of parameters not considered previously and prove it is 3-competitive on unicyclic graphs as well as $5/2+\sqrt{2}\approx 3.91$-competitive on cactus graphs.
The best known lower bound for these two graph classes is 2.
\end{abstract}

\section{Introduction}
Exploration and map construction problems arise in robotics when a robot is tasked with exploring an unknown environment \cite{berman}.
While moving around, the robot gathers information about its surroundings and with this must decide how to proceed in its exploration.
This problem can be modeled as exploring an unknown graph.

We consider the fixed graph scenario, first introduced in \cite{kaly}, in which a connected undirected graph $G=(V,E)$ with $n=|V|$ vertices is explored.
Each edge $e\in E$ has a positive weight $|e|$ and the graph contains a distinguished start node $s\in V$ from which the searcher begins its exploration.
We assume that each vertex has an assigned unique identifier (ID).
Upon arriving at a vertex for the first time, the searcher obtains the weights of all edges incident to that vertex as well as the IDs of all adjacent vertices.
The searcher must visit every vertex of the graph before finally returning to the start node.

To measure the performance of an online algorithm, we use competitive analysis which compares its solution to the solution of the corresponding offline problem, in our case the traveling salesperson problem.
We call an online exploration algorithm \emph{$c$-competitive} if it produces a tour no longer than $c$ times the optimal (offline) tour for every instance.
The \emph{competitive ratio of} an online algorithm is defined as the infimum over all $c$ such that the algorithm is $c$-competitive.

The best known algorithms on general graphs are Nearest Neighbor (NN) \cite{rosenkrantz} and hierachical DFS \cite{megow} both with a competitive ratio of $\Theta(\log n)$.
For NN this worst-case ratio is tight even on planar unit-weight (unweighted) graphs \cite{hurkens}.
In particular, no algorithm with constant competitive ratio is known on general graphs.
On the other hand, the best known lower bound on the competitive ratio of an online algorithm has recently been improved from 2.5 \cite{dobrev} to $10/3$ \cite{birx}.

Algorithms with constant competitive ratio are known for several restricted graph classes.
The Blocking algorithm, introduced as \emph{ShortCut} \cite{kaly} and reformulated due to a "precarious issue in the formal implementation" \cite{megow}, is known to be 16-competitive on planar graphs.
Megow et al.\ \cite{megow} also showed that Blocking has a constant competitive ratio on graphs with bounded genus but not in general.
Furthermore, they present a hierarchical generalization of DFS which is $2k$-competitive on graphs with at most $k$ distinct weights and use this to construct an algorithm that is $\Theta(\log n)$-competitive on general graphs.

The problem was solved on cycles by Miyazaki et al.\ \cite{miyazaki} who found an algorithm that achieves the best possible competitive ratio on that graph class of $(1+\sqrt{3})/2\approx 1.366$.
Furthermore, they proved a general lower bound of 2 on unweighted graphs which DFS achieves on this graph class.
Brandt et al.\ \cite{brandt} studied tadpole graphs (a cycle with a path attached to it) and showed that NN is $2$-competitive on them.
Additionally, they extended Miyazaki's lower bound example for unweighted graphs \cite{miyazaki} to tadpole graphs to prove this is also the best achievable competitive ratio.

\subsection{Our Results}
We prove that the tight lower bound of $\Theta(\log n)$ on the competitive ratio of NN also holds on trees which improves the previous lower bound of $\Theta(\log n/\log\log n)$ \cite{wattenhofer}.
We do so by modifying a graph construction Hurkens and Woeginger \cite{hurkens} use to prove the lower bound on planar unit-weight graphs.

Furthermore, we prove upper bounds for two more graph classes: 
For unicyclic graphs, i.e. graphs that contain exactly one cycle, we prove that Blocking is 3-competitive.
We achieve this by examining the algorithm for a range of parameters which had not been considered previously.
This way, we also prove it to be $5/2+\sqrt{2}\approx 3.91$-competitive on cactus graphs, i.e.\ graphs in which any two cycles have at most one vertex in common.
The best known lower bound for these two graph classes is 2.

\subsection{Further Related Work}
Some of the first formal models for exploration problems were introduced by Papadimitriou and Yannakakis in \cite{papa} in which they search for the shortest path between two points in an unknown environment.
Following this, the problem of exploring an unknown graph was studied by Deng and Papadimitriou \cite{deng}, Albers and Henzinger \cite{albers} and Fleischer and Trippen \cite{fleischer}.
They worked with strongly connected directed graphs and the premise that all vertices as well as all edges of the graph are required to be explored.
Our setting in which only all vertices have to be visited has also been studied on directed graphs by Förster and Wattenhofer \cite{foerster} who proved upper and lower bounds on the best-possible competitive ratio linear in the number of vertices.

\section{Nearest Neighbor on Trees}
On general graphs, the Nearest Neighbor algorithm is $\Theta(\log n)$-competitive \cite{rosenkrantz}. We will prove that this bound is tight even for the simple graph class of trees.
To do so, we modify a construction Hurkens and Woeginger \cite{hurkens} use to prove the tightness of the bound for unweighted planar graphs.
Instead of considering a path of triangles, as they do, we construct a path with unit length edges (spikes) attached to it.

\begin{figure}[htbp]
\centering

\begin{tikzpicture}[scale=1] %
\foreach \x in {0, 3, 7, 10}{
	\draw (\x, 0) -- (\x+1, 0) node[vertex]{};
	\node[vertex] (v\x) at (\x, 0){};
}
\foreach \x in {1, 8}{
    \draw (\x, 0) -- node[weight, swap]{2} (\x+2, 0);
}
\foreach \x in {4}{
    \draw (\x, 0) -- node[weight, swap]{3} (\x+3, 0);
}
\foreach \x in {1, 3, 4, 7, 8, 10, 11}{
	\draw (\x, 0) -- (\x,1) node[vertex]{};
}

\node[weight, below] at (0,0) {$l_3=l_2'$};
\node[weight, below] at (4,0) {$r_2'$};
\node[weight, below] at (7,0) {$l_2''$};
\node[weight, below] at (11,0) {$r_3=r_2''$};

\node[weight, above] at (3,1) {$m_2'$};
\node[weight, above] at (7,1) {$m_3$};
\node[weight, above] at (10,1) {$m_2''$};

\end{tikzpicture}

\caption{The graph $G_3$}
\label{fig:trees_nn_G_k}
\end{figure}
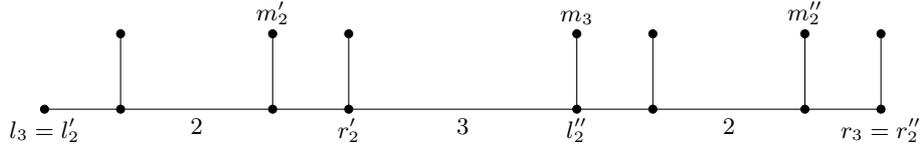

We recursively define graphs $G_k$ for $k\geq 1$ containing three distinguished vertices $l_k, r_k$ and $m_k$ (see Figure \ref{fig:trees_nn_G_k}).
The graph $G_1$ simply consists of the two unit length edges $l_1r_1$ and $r_1m_1$.
For $k\geq 2$, we construct $G_k$ by placing two copies $G_{k-1}'$ and $G_{k-1}''$ of $G_{k-1}$ next to each other and, in the middle, adding a new vertex $m_k$.
To connect the components, we add an edge of length $k$ between $r_{k-1}'$ and $l_{k-1}''$ as well as a unit weight edge between $l_{k-1}''$ and $m_k$.
Finally, we set $l_k=l_{k-1}'$ and $r_k = r_{k-1}''$.
Let $p_k$ be the length of the shortest path from $l_k$ to $r_k$ in $G_k$.
Since $p_1=1$ and $p_k = 2p_{k-1}+k$ for $k\geq 2$, a simple induction shows that $p_{k}=2^{k+1}-k-2$.

\begin{lemma}\label{lem:G_k_subgraph}
For $k\geq 1$, consider a graph $G$ that contains $G_k$ as a subgraph.
Furthermore, assume that edges between $G_k$ and $G-G_k$ are either incident to $l_k$ and have a length of at least 1 or are incident to $r_k$ and have a length of at least $k+1$.

Then there exists a partial NN tour exploring all of $G_{k}$ that starts in $l_k$, finishes in $m_k$ and has a length of $(k+1)2^k - 2$.
\end{lemma}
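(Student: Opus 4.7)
The plan is to argue by induction on $k$. The base case $k=1$ is immediate: $G_1$ is the path $l_1$--$r_1$--$m_1$ with two unit edges, so NN visits $r_1$ and then $m_1$ (neither choice contested by the external-edge hypothesis, since any external vertex is at distance $\geq 1$ via $l_1$ or $\geq 2$ via $r_1$), yielding total length $2=(1+1)\cdot 2^1 - 2$.

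For the inductive step, I would first verify that both copies $G_{k-1}'$ and $G_{k-1}''$ inherit the lemma's hypothesis at level $k-1$: the new cross-edge of length $k$ satisfies $k\geq(k-1)+1$ (serving as the external edge at $r_{k-1}'$ of $G_{k-1}'$ and as one external edge at $l_{k-1}''$ of $G_{k-1}''$); the unit edge $l_{k-1}''m_k$ has length $1\geq 1$; and the inherited external edges of $G_k$ at $l_k$ and $r_k$ of lengths $\geq 1$ and $\geq k+1$ remain strong enough. Then split the claimed tour into four consecutive phases: (a) exhaust $G_{k-1}'$ from $l_k$, ending at $m_{k-1}'$; (b) transit to $l_{k-1}''$; (c) exhaust $G_{k-1}''$ from $l_{k-1}''$, ending at $m_{k-1}''$; (d) transit to $m_k$. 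Phases (a) and (c) are each furnished by the inductive hypothesis, contributing $k\cdot 2^{k-1}-2$.

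For the transit in (b), every still-unvisited vertex lies in $G_{k-1}''$ or equals $m_k$, so any route to an unvisited vertex must cross the unique cross-edge. Using $p_{k-2}=2^{k-1}-k$, the shortest such route is $m_{k-1}'\to r_{k-1}'$ inside $G_{k-1}'$ (length $1+p_{k-2}$) followed by the cross-edge of length $k$, totalling $2^{k-1}+1$ and arriving at $l_{k-1}''$, which is therefore the nearest unvisited vertex (every alternative is at least one unit farther). For (d), after phase (c) the only remaining unvisited vertex of $G_k$ is $m_k$, and inside $G_{k-1}''$ the distance from $m_{k-1}''$ to $l_{k-1}''$ computes to $1+(k-1)+p_{k-2}=2^{k-1}$, so adding the final unit edge to $m_k$ again gives $2^{k-1}+1$. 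Summing, $2(k\cdot 2^{k-1}-2)+2(2^{k-1}+1)=(k+1)\cdot 2^k - 2$, as required.

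I expect the chief obstacle to be reconciling the inductive hypothesis for phase (c) --- which posits NN starting fresh at $l_{k-1}''$ --- with the actual state after phases (a) and (b), in which all of $G_{k-1}'$ is already visited. The reconciliation is that the inductively produced partial NN tour stays entirely inside $G_{k-1}''$, so pre-visiting vertices outside it can only remove external candidates from consideration; the NN choices inside $G_{k-1}''$ are therefore unchanged. The delicate moment is the very first choice at $l_{k-1}''$, where the interior neighbour and the external vertex $m_k$ are tied at distance $1$: the existential wording of the lemma (\emph{there exists} a partial NN tour) lets us pick a tie-breaking that steps into $G_{k-1}''$, and the inductive statement then guarantees that $m_k$ cannot become strictly closest before $G_{k-1}''$ is exhausted. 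Pinning down this containment for both sub-copies --- so that external edges of prescribed minimum length really do shield the sub-exploration --- is where the edge-length conditions of the hypothesis earn their keep.
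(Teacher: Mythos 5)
Your proposal follows essentially the same route as the paper's proof: induction on $k$, verification that both sub-copies inherit the hypothesis at level $k-1$, and the same four-phase decomposition with matching distance computations summing to $2(k2^{k-1}-2)+2(2^{k-1}+1)=(k+1)2^k-2$. One small correction and one deferral worth closing: in phase (b) the unvisited vertices of $G-G_k$ reachable through $l_k$ are at distance at least $1+(k-1)+p_{k-2}+1=2^{k-1}+1$ from $m_{k-1}'$, i.e.\ possibly exactly \emph{tied} with $l_{k-1}''$ rather than ``one unit farther,'' so here too you must invoke the existential tie-breaking (the paper writes that NN \emph{may} visit $l_{k-1}''$ next); and in phase (d) the vertices outside $G_k$ reachable through $r_k$ sit at distance at least $1+p_{k-2}+(k+1)=2^{k-1}+2$, exactly one more than $m_k$ --- this is precisely where the length-$(k+1)$ hypothesis earns its keep and should be computed explicitly rather than left as the acknowledged ``chief obstacle.''
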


\begin{proof}
We prove this by induction on $k$.
For $k=1$ the tour from $l_1$ to $r_1$ to $m_1$ satisfies all conditions.
For $k\geq 2$, assume that NN resides in $l_{k}=l_{k-1}'$.
Note that both $G_{k-1}'$ and $G_{k-1}''$ as subgraphs of $G$ satisfy all conditions from the lemma.
Therefore, by the induction hypothesis, NN may next explore all of $G_{k-1}'$ on a tour of length $k 2^{k-1}-2$ ending up in $m_{k-1}'$.
From $m_{k-1}'$, the shortest path to any vertex of $G-G_k$ includes $l_k=l_{k-1}'$ and has a length of at least $1+(k-1)+p_{k-2}+1$.
Since the distance to $l_{k-1}''$ is $1+p_{k-2}+k$, NN may visit $l_{k-1}''$ next. (This argument also holds for $k=2$ with $p_0=0$ since we can already start the recursive construction at $G_0$ consisting of a single vertex $l_0=r_0=m_0$.)
Again by the induction hypothesis, NN may next explore all of $G_{k-1}''$ finishing in $m_{k-1}''$.
At this point, the shortest path from $m_{k-1}''$ to any vertex of $G-G_k$ includes either $r_k=r_{k-1}''$ or $l_k=l_{k-1}'$.
In the former case, the shortest path has a length of at least $1+p_{k-2}+(k+1)$.
In the latter case, the length of the shortest path is at least $3k+3p_{k-2}>k+2+p_{k-2}$.
With $m_k$ only being at a distance of $1+(k-1)+p_{k-2}+1$, NN will visit this vertex next.
The total length of the tour taken by NN from $l_k$ to $m_k$ is
\begin{equation*}
    2 \left(k 2^{k-1}-2\right) + 2\left( p_{k-2}+k+1\right) = (k+1)2^k - 2.
\end{equation*}
\end{proof}

\begin{theorem}\label{thm:trees_nn}
The competitive ratio of NN on trees is $\Theta(\log n)$.
\end{theorem}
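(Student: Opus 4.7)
The upper bound $O(\log n)$ is inherited from the general-graph analysis of NN by Rosenkrantz et al., so only the matching lower bound $\Omega(\log n)$ requires new work. The strategy is to apply Lemma \ref{lem:G_k_subgraph} with $G = G_k$ itself: since $G - G_k$ is empty, the hypothesis on cross-edges holds vacuously, and the lemma yields a partial NN execution starting at $l_k$ that visits every vertex of $G_k$, terminates at $m_k$, and has length $(k+1)2^k - 2$. The complete NN tour on $G_k$ must additionally return to $l_k$, so it is at least this long; this partial-tour bound will already be enough to force the ratio.

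Next I would compute the two quantities that control the ratio on $G_k$. A short induction on the recursive construction gives the vertex count $n_k = 2 n_{k-1} + 1$ with $n_1 = 3$, so $n_k = 2^{k+1} - 1$ and hence $k = \Theta(\log n_k)$. Since $G_k$ is a tree, the optimal TSP tour equals twice the total edge weight $W_k$. Duplicating $G_{k-1}$ and attaching the length-$k$ connector together with the unit edge to $m_k$ gives the recursion $W_k = 2 W_{k-1} + (k+1)$ with $W_1 = 2$, which solves to $W_k = 3 \cdot 2^k - k - 3$, so $\mathrm{OPT}(G_k) = 6 \cdot 2^k - 2k - 6 = \Theta(2^k)$.

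Combining the two estimates, the competitive ratio on $G_k$ is at least
\[
\frac{(k+1) 2^k - 2}{6 \cdot 2^k - 2k - 6} = \Theta(k) = \Theta(\log n_k),
\]
which together with the $O(\log n)$ upper bound yields the claimed $\Theta(\log n)$ bound. The real content of the theorem is already packaged in Lemma \ref{lem:G_k_subgraph}; what is left is routine — checking that the lemma applies to $G_k$ as a subgraph of itself, solving the two linear recurrences, and noting that the full NN tour is at least as long as the partial tour produced by the lemma. The only conceptual step to be careful about is recognizing that the hypotheses on edges between $G_k$ and $G - G_k$ degenerate when $G = G_k$, so no enclosing graph needs to be constructed.
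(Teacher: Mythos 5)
Your proposal is correct and follows essentially the same route as the paper: both apply Lemma \ref{lem:G_k_subgraph} to $G_k$ itself (where the cross-edge hypothesis is vacuous), compute $\mathrm{OPT}(G_k)=2w_k=6\cdot 2^k-2k-6$ via the same recurrence, and count $n=2^{k+1}-1$ vertices to conclude a ratio of $\Omega(k)=\Omega(\log n)$. The only cosmetic difference is that the paper also adds the exact cost of the final return from $m_k$ to $l_k$ to get $\mathrm{NN}(G_k)=(k+2)2^k-2$, whereas you simply lower-bound the full tour by the partial tour, which suffices for the asymptotic claim.
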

\begin{proof}
The upper bound follows directly from the general case \cite{rosenkrantz}.
For the lower bound, consider $G_k$ and let $l_k$ be the start node.
Then NN explores $G_k$ on the tour described in Lemma \ref{lem:G_k_subgraph} and, finally, returns from $m_k$ to $l_k$.
Hence,
\begin{equation*}
    \mathrm{NN}(G_k)= (k+1)2^k - 2 + 1 + k + p_{k-1} = (k+2)2^k -2
\end{equation*}
On the other hand, let $w_k$ be the total weight of $G_k$.
Then $\mathrm{OPT}(G_k) = 2w_k = 6\cdot 2^k -2k-6$, which follows from $w_1=2$ and $w_{k}=2w_{k-1}+k+1$ for $k\geq 2$.
Finally, it is easy to see inductively that $G_k$ has $n=2^{k+1}-1$ vertices which implies $k+1=\log(n+1)$.
This proves
\begin{equation*}
    \frac{\mathrm{NN}(G_k)}{\mathrm{OPT}(G_k)}=\frac{(k+2)2^k-2}{6\cdot 2^k -2k-6} \geq \frac{k+2}{6} = \frac{\log(n+1)+1}{6}.
\end{equation*}
\end{proof}

\section{Blocking on Unicyclic Graphs and Cactus Graphs}

The algorithm $\blocking$ is a generalization of DFS \cite{megow}. It uses a \emph{blocking condition} which, depending on a fixed \emph{blocking parameter $\delta \in \mathbb{R}$}, determines when to delay the traversal of an edge, possibly forever.

\begin{definition}[Boundary edge]
During the exploration, we call an edge a \emph{boundary edge} when one of its endpoints has been visited while the other has not.
\end{definition}

Whenever we define a boundary edge in the form $e=(u,v)$, the first vertex (in this case $u$) has been visited while the second has not.

\begin{definition}[Blocking condition]
A boundary edge $e=(u,v)$ is \emph{blocked} by another boundary edge $e'=(u',v')$ if $e'$ is shorter than $e$ and the length of any shortest path from $u$ to $v'$ is at most $(1+\delta)|e|$.
\end{definition}

\vspace{\baselineskip}
\begin{algorithm}[H]\label{algo:blocking}
\DontPrintSemicolon
\KwIn{A partially explored graph $G$, and a vertex $y$ of $G$ that is explored for the first time.}
\While{there is an unblocked boundary edge $e=(u,v)$, with $u$ explored and $v$ unexplored, such that $u=y$ or such that $e$ had previously been blocked by some edge $(u',y)$}{
	walk a shortest known path from $y$ to $u$\;
	traverse $e=(u,v)$\;
	$\blocking(G,v)$\;
	walk a shortest known path from $v$ to $y$\;
}
\caption{The exploration algorithm $\blocking(G,y)$ as in \cite{megow}}
\end{algorithm}
\vspace{\baselineskip}

In \cite{kaly} and \cite{megow} the algorithm is considered only for blocking parameters $\delta >0$.
We also examine it for $-1<\delta\leq 0$.
Note that for $\delta \leq -1$ the blocking condition will never be satisfied, implying that $\blocking[-1]$ is simply DFS.

Arguing that the algorithm actually explores the whole graph for the new parameter range works just as for $\delta>0$ in \cite{megow}.
Let $G$ be the graph to be explored.
It is clear that the algorithm terminates since a new vertex is explored in every iteration of the while loop.
Suppose not all vertices have been visited after the termination of the algorithm and let $e=(u,v)$ be a shortest boundary edge at that time. Since no shorter boundary edge exists, $e$ is not blocked.
However, the edge must have been blocked at some point, as otherwise it would have been explored during the call of $\blocking(G,u)$.
Assume $e$ became unblocked for the last time through the traversal of the edge $(x,y)$.
But that means $e$ should have been traversed during the call of $\blocking(G,y)$ which is a contradiction.

On planar graphs, $\blocking$ is $2(2+\delta)(1+2/\delta)$-competitive for $\delta> 0$ and in particular 16-competitive for $\delta=2$ \cite{megow}.
Like in that proof we charge the costs of the algorithms actions to the edges of the explored graph.
Let $B_\delta$ be the cost of $\blocking$, i.e.\ the sum of charges to all edges.
For each iteration of the while loop, the costs of the movements described in the Lines 2, 3 and 5 are charged to the edge traversed in Line 3.
Note that only unblocked boundary edges are charged this way and, in particular, every edge will be charged at most once.
Moreover, the following holds.

\begin{lemma}\label{lem:blocking_charge}
Every edge $e$ that is charged, is charged at most $(4+2\delta)|e|$.
\end{lemma}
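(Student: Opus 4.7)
The plan is to decompose the charge to $e=(u,v)$ into three pieces coming from Lines 2, 3 and 5 of a single iteration of the while loop, and to bound each piece separately. Line 3 contributes exactly $|e|$, so the work is in bounding the walks from $y$ to $u$ and back from $v$ to $y$. The loop condition gives precisely two reasons for $e$ to be traversed during this iteration, namely either $u=y$ or $e$ was previously blocked by some edge $(u',y)$ that has since been traversed, and I would run a separate (short) calculation for each.

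In the first case, $u=y$, the Line 2 walk costs $0$, and for Line 5 we can use the edge $e$ itself to return from $v$ to $y=u$, costing at most $|e|$. This yields a total charge of at most $2|e|$, which is dominated by $(4+2\delta)|e|$ since $\delta>-1$. In the second case, I would use the blocking condition directly: at the moment when $e$ was blocked by $(u',y)$, the definition guarantees a known path from $u$ to the (then unexplored) endpoint $y$ of length at most $(1+\delta)|e|$. Thus the Line 2 walk costs at most $(1+\delta)|e|$. For Line 5 one can go from $v$ back to $u$ across $e$ and then follow the same path of length at most $(1+\delta)|e|$ from $u$ to $y$, giving a Line 5 cost of at most $(2+\delta)|e|$. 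Summing the three pieces yields exactly $(1+\delta)|e|+|e|+(2+\delta)|e|=(4+2\delta)|e|$.

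The main subtlety, and really the only one worth spelling out, is the temporal monotonicity: the bound from the blocking condition is established at the moment $e$ was blocked, but we want to apply it later when $e$ is actually traversed. Since the algorithm only ever discovers more edges and vertices, the length of the shortest known path from $u$ to $y$ can only decrease over time; hence the $(1+\delta)|e|$ bound remains valid throughout. I would also briefly remark that once $(u',y)$ has been traversed, $y$ is explored, so the path used in the argument lies entirely in the explored subgraph and is a legitimate walk for the algorithm.
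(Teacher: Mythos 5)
Your proof is correct and takes essentially the same route as the paper's: the same two-case split (charge at most $2|e|$ when the algorithm already resides at $u$, and $(1+\delta)|e|+|e|+(2+\delta)|e|$ via the blocking condition with the edge $(u',y)$ otherwise). Your remark on the temporal monotonicity of known distances is a detail the paper leaves implicit but does not change the argument.
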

\begin{proof}
If the algorithm resides at $u$ before traversing an edge $e=(u,v)$ in Line 3, the action in Line 2 will incur no cost and $e$ will be charged with at most $2|e|$.
Otherwise, if $e$ was previously blocked by an edge $(u',y)$, the blocking condition implies that $d(u,y)\leq (1+\delta)|e|$.
Therefore, the edge $e$ will be charged at most $(1+\delta)|e|$, $|e|$ and $(2+\delta)|e|$ by the movements described in the Lines 2, 3 and 5, respectively.
\end{proof}

\begin{definition}
We call an edge contained in a cycle a \emph{long edge} if it is longer than half the total length of that cycle.
\end{definition}

\begin{lemma}\label{lem:cycles_long_edges}
Let $C$ be a cycle contained in some graph and let $e$ be a long boundary edge on $C$ which is not blocked. 
Then
\begin{equation*}
|e|< \frac{1}{1+\delta}(|C|-|e|).
\end{equation*}
In particular, long boundary edges are always blocked for $\delta> 0$.
\end{lemma}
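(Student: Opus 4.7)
The plan is to exhibit a boundary edge $e'$ on $C$ (other than $e$) that nearly blocks $e$, and to read the desired inequality off the failure of the blocking condition with respect to that $e'$.

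First I would produce such an $e'$. Since $e=(u,v)$ has $u$ visited and $v$ unvisited, walking along $C$ from $u$ to $v$ via the path $P := C \setminus \{e\}$ of length $|C|-|e|$ must cross at least one boundary edge. I would take $e'=(u',v')$ to be the \emph{first} such edge encountered along this walk, so that the sub-path of $P$ from $u$ to $v'$ (which ends with the edge $e'$ itself) has total weight at most $|C|-|e|$.

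Next I would verify $|e'|<|e|$: since $e'$ lies on $P$ we have $|e'|\leq |C|-|e|$, and the long-edge hypothesis $|e|>|C|/2$ gives $|C|-|e|<|e|$. Because $e$ is not blocked by $e'$, the blocking condition must fail for this pair; with the first clause $|e'|<|e|$ already holding, the second clause must fail strictly, i.e.\ $d(u,v')>(1+\delta)|e|$. On the other hand, the sub-path of $P$ from $u$ to $v'$ yields $d(u,v')\leq |C|-|e|$, so chaining these two bounds gives $(1+\delta)|e|<|C|-|e|$, which is the claimed inequality. The \emph{in particular} clause is then immediate: for $\delta>0$ an unblocked long boundary edge would satisfy $|e|<\tfrac{1}{1+\delta}(|C|-|e|)<|C|-|e|$, contradicting $|e|>|C|-|e|$ from the definition of long.

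The only subtlety is the choice of $e'$: an arbitrary boundary edge on $C$ would not immediately furnish the bound $d(u,v')\leq |C|-|e|$, so picking the first boundary edge encountered along $P$ is what makes the shortest-path bound trivial — the natural witness path from $u$ to $v'$ stays inside $P$. Apart from this, the argument is essentially mechanical and avoids any appeal to the geometry of the ambient graph outside $C$.
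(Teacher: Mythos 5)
Your proposal is correct and follows essentially the same route as the paper: both take $e'$ to be the first boundary edge met when traversing $C-e$ from $u$ to $v$, note that $e$ being long forces $|e'|<|e|$, and then read $(1+\delta)|e|<d(u,v')\leq |C|-|e|$ off the failure of the blocking condition. Your write-up merely spells out the intermediate steps (why $|e'|<|e|$ and why $d(u,v')\leq|C|-|e|$) that the paper leaves implicit.
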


\begin{proof}
Let $e=(u,v)$. At the time $e$ is a boundary edge, another boundary edge on $C$ exists.
Let $e'=(u',v')$ be the first boundary edge on $C$ encountered when traversing $C-e$ from $u$ to $v$.
Since $e$ is a long edge it is longer than $e'$.
In order for $e$ not to be blocked by $e'$ we must have $(1+\delta)|e|<d(u,v')\leq |C|-|e|$.

For $\delta > 0$, the fact that $e$ is not blocked implies $|e|<|C|-|e|$ which contradicts the definition of a long edge.
\end{proof}

Since a cactus graph only contains edge disjoint cycles, it is easy to see that its optimal tour can be characterized in a similar way to cycles.

\begin{lemma}\label{lem:unicycles_opt}
The optimal tour of a cactus graph traverses all edges which are not contained in a cycle twice. Furthermore, in cycles which contain a long edge all but the long edge are traversed twice while otherwise every edge in the cycle is traversed once.
\end{lemma}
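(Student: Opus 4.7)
The plan is to establish the claim by proving matching upper and lower bounds on the cost of the optimal tour. Define $f(C)=|C|$ if the cycle $C$ has no long edge and $f(C)=2(|C|-|e_{\max}|)$ otherwise; the goal is $\mathrm{OPT}(G)=2\sum_{e\text{ bridge}}|e|+\sum_C f(C)$, which corresponds precisely to the edge multiplicities claimed in the lemma (up to ties when some edge has length exactly $|C|/2$, in which case both configurations are optimal).

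The upper bound follows from an explicit construction: do a DFS-style traversal of the block--cut tree of $G$, interleaving recursive visits to sub-blocks at every cut vertex encountered. Each bridge is crossed twice; for each cycle without a long edge we walk once around; for each cycle $C$ with long edge $e$, we enter at the unique cut vertex used to access $C$ from outside, walk the path $C-e$ to one end and back, then to the other end and back, covering each non-long edge exactly twice at total cost $2(|C|-|e|)$.

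For the lower bound I would induct on the number of blocks of $G$. Pick a leaf block $B$ with unique cut vertex $v$. If $B=uv$ is a bridge with $u$ a leaf, any tour must cross $uv$ at least twice, so subtracting two such crossings yields a valid tour of $G\setminus\{u\}$ and $\mathrm{OPT}(G)\geq 2|uv|+\mathrm{OPT}(G\setminus\{u\})$. If $B$ is a leaf cycle, contract $B-v$ to the single vertex $v$: the restriction of an optimal tour of $G$ to the edges of $B$ is a closed walk based at $v$ (or at $s$, if $s\in B$) on the cycle $B$ visiting all of its vertices, while the traversals outside $B$ form a valid tour of $G'=G-(B-v)$. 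Hence $\mathrm{OPT}(G)\geq f(B)+\mathrm{OPT}(G')$, provided we establish the cycle-TSP bound that any closed walk on a cycle $B$ covering all vertices costs at least $f(B)$.

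That cycle-TSP bound is the heart of the argument. Let $x_1,\dots,x_m$ be the edge multiplicities in such a closed walk. Every vertex has even degree in the walk, and at each vertex the two multiplicities of incident $B$-edges must sum to an even number, so walking around the cycle forces all $x_i$ to share a common parity. If the common parity is odd, each $x_i\geq 1$ and the cost is at least $|B|$. If it is even, connectedness shows at most one $x_i$ may equal $0$ (removing two edges from a cycle disconnects it, leaving a vertex unreachable), so the remaining multiplicities are at least $2$ and the cost is at least $2(|B|-|e_{\max}|)$. The minimum of the two bounds is exactly $f(B)$. The main technical obstacle is bookkeeping in the reduction: one must verify that collapsing a leaf cycle to its cut vertex really does decompose any tour into a closed $B$-walk at $v$ plus a valid closed walk on $G'$, which rests on the fact that $B$ and $G'$ share only the single vertex $v$.
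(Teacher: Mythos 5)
Your proof is correct, but note that the paper does not actually prove this lemma: it is stated without proof, justified only by the preceding remark that the cycles of a cactus are edge-disjoint, so that the known characterization of optimal tours on a single cycle carries over. Your argument fills that gap in a self-contained way, and its two load-bearing steps both check out: (i) the decomposition at a leaf block $B$ with cut vertex $v$ is valid because every vertex of $B-v$ is incident only to edges of $B$, so the excursions of the tour into $B-v$ concatenate to a closed walk at $v$ on $B$ visiting all of $B$, and deleting them leaves a closed tour of $G-(B-v)$; and (ii) the cycle bound is the standard parity argument --- every vertex of the cycle has even degree in that restricted walk, so the multiplicities $x_1,\dots,x_m$ share a common parity, and in the even case at most one $x_i$ can vanish (an isolated vertex or an unreachable arc otherwise), giving the lower bound $\min\bigl(|B|,\,2(|B|-|e_{\max}|)\bigr)=f(B)$. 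Combined with your explicit block--cut-tree traversal for the upper bound, this yields exactly the formula for $\mathrm{OPT}$ that the paper uses later (the exact value $2|G\setminus C|+2(|C|-|e|)$ for unicyclic graphs and the lower bound $\mathrm{OPT}\geq|G'|+\sum_i 2(|C_i|-|e_i|)$ for cacti). One cosmetic point: the lemma as phrased asserts edge multiplicities of \emph{the} optimal tour, whereas your argument establishes the optimal value and exhibits a tour attaining it; that is all the paper needs, and as you observe the multiplicities are unique only up to ties when some edge has length exactly half its cycle.
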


In the following we will prove the previously stated upper bounds on the competitive ratio of $\blocking$ on unicyclic graphs and on cactus graphs.
For unicyclic graphs, we also prove the bound to be tight.
Finally, we prove that for $\delta\leq 0$ the competitive ratio of $\blocking$ on planar graphs is in $\Omega(n)$, i.e. considering the new parameter range for that graph class is not advantageous.

\begin{theorem}\label{thm:unicycles_blocking}
\sloppy On unicyclic graphs, $\blocking$ is $(4+2\delta)$-competitive for $\delta> 0$ and $\max\left(4+2\delta, 3+\frac{\delta^2+\delta/2}{1+\delta} \right)$-competitive for $\delta\leq 0$.
In particular, the algorithm is 3-competitive for $\delta=-\frac{1}{2}$.
\end{theorem}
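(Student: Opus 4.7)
The plan is a case analysis on the cycle structure of the unicyclic graph~$G$, leveraging Lemma~\ref{lem:unicycles_opt} to pin down $\mathrm{OPT}$ and using the per-edge charge of Lemma~\ref{lem:blocking_charge} together with the long-edge constraint of Lemma~\ref{lem:cycles_long_edges}. Throughout, let $T$ denote the total weight of the edges of $G$ that are not on the (unique) cycle $C$; when $C$ contains a long edge $e^*$, set $S=|C|-|e^*|$.

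First, I would dispose of the case where $C$ has no long edge. Lemma~\ref{lem:unicycles_opt} gives $\mathrm{OPT}=2T+|C|$, and summing Lemma~\ref{lem:blocking_charge} over every edge yields $\blocking \leq (4+2\delta)(T+|C|)$. The ratio $(4+2\delta)(T+|C|)/(2T+|C|)$ is monotonically decreasing in $T$, so its supremum $4+2\delta$ is attained as $T\to 0$. This already accounts for the $4+2\delta$ term of the $\max$.

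Next, suppose $C$ has a long edge $e^*$, so $\mathrm{OPT}=2T+2S$. If Blocking never traverses $e^*$, it explores $G$ using only edges of the spanning tree $G-e^*$; each such edge is traversed exactly twice, giving $\blocking=2(T+S)=\mathrm{OPT}$. The substantive subcase is when $e^*$ is traversed; Lemma~\ref{lem:cycles_long_edges} then forces $|e^*|<S/(1+\delta)$ (in particular, $\delta<0$). The key refinement is to replace Lemma~\ref{lem:blocking_charge}'s generic $(4+2\delta)|e^*|$ charge by the tighter bound $|C|$: in the iteration that traverses $e^*$ from a current cycle vertex $y$, the Line~2 walk $y\to u^*$ follows the portion of the short arc already explored in the forward sweep, while the Line~5 walk $v^*\to y$ follows the remaining portion (revealed by the recursive call $\blocking(G,v^*)$, which in a unicyclic graph always sweeps the whole short arc because no shorter boundary edge can block the sweep); together these two walks cost exactly $|C|-|e^*|=S$, so the iteration costs at most $|e^*|+S=|C|$. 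If instead $y$ lies inside a tree hanging off $C$, the Line~2 and Line~5 walks additionally double-traverse the tree path to $y$ (of length at most~$T$), but a direct derivative computation shows the ensuing ratio is still maximised at $T=0$, where this extra term vanishes.

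Combining the refined charge on $e^*$ with Lemma~\ref{lem:blocking_charge} on every other edge gives $\blocking \leq (4+2\delta)(T+S) + |C|$. The derivative of the resulting ratio in $T$ is negative (because $|e^*|>S$), so the worst case is $T=0$; sending $|e^*|\to S/(1+\delta)$ then yields $3+(\delta^2+\delta/2)/(1+\delta)$ after routine algebra. The main obstacle will be making this refined charge rigorous: I need to verify that $\blocking(G,v^*)$ really does reveal the full short arc before Line~5 executes (every candidate boundary edge on the short arc stays unblocked as the sweep proceeds, which uses the absence of any other cycle) and that the tree-detour bookkeeping really is dominated at $T=0$. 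For $\delta>0$, Lemma~\ref{lem:cycles_long_edges} rules out traversing $e^*$ outright, so the analysis collapses to the first case and the ratio is $4+2\delta$. Substituting $\delta=-\tfrac{1}{2}$ into both expressions inside the $\max$ gives $3$, which is the claimed $3$-competitiveness.
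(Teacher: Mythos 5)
Your plan follows the paper's proof almost step for step: the same case split on whether the cycle's long edge is charged, the same refined charge of $|C|+2|P|$ on the long edge (with $P$ the tree path from the algorithm's current position $y$ to the cycle), the same use of Lemma~\ref{lem:cycles_long_edges} to bound $|e^*|<(|C|-|e^*|)/(1+\delta)$, and equivalent final algebra (your derivative-in-$T$ argument corresponds to the paper's observation that $\delta\leq(\delta^2+\delta/2)/(1+\delta)$ for $\delta\leq 0$, which lets one compare the coefficients of $|G\setminus C|$ and of $|C|-|e^*|$ directly against $\mathrm{OPT}=2|G\setminus C|+2(|C|-|e^*|)$).

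The one genuine gap is precisely the step you yourself flag as ``the main obstacle'': you must prove, not merely assert, that the recursive call $\blocking(G,v^*)$ reveals the entire short arc before the Line~5 return walk executes. The paper does this by contradiction: let $H$ be the induced subgraph of explored vertices reachable from $v^*$ in the known graph without crossing $e^*$, and let $e''$ be a shortest boundary edge of $H$. Since $H$ still contains a boundary edge on $C$ shorter than $e^*$, also $|e''|<|e^*|$; any boundary edge outside $H$ lies at distance greater than $|e^*|>(1+\delta)|e^*|>(1+\delta)|e''|$ from the explored endpoint of $e''$ (this is where unicyclicity enters: leaving $H$ requires crossing $e^*$ or going all the way around), so nothing blocks $e''$, contradicting the fact that the call returned without traversing it. Your sketch names the right ingredients, but this argument is the crux and must be written out. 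Separately, your claim that if $e^*$ is never traversed then every other edge is traversed \emph{exactly} twice and $B_\delta=\mathrm{OPT}$ is false: $\blocking$ is not DFS, and the shortest-path walks of Lines~2 and~5 can cross an edge many times. This does not endanger the theorem, since in that subcase Lemma~\ref{lem:blocking_charge} already gives $B_\delta\leq(4+2\delta)\bigl(|G|-|e^*|\bigr)\leq(2+\delta)\,\mathrm{OPT}$, but the statement as written should be replaced by that charging bound.
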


\begin{proof}
Let $G$ be a unicyclic graph containing a cycle $C$.
By Lemma \ref{lem:blocking_charge}, all edges of $G$ are charged at most $4+2\delta$ times their length.
On the other hand, the optimal tour will traverse all edges in $G$ that are not long at least once according to Lemma \ref{lem:unicycles_opt}.
Since long edges are not charged for $\delta> 0$ by Lemma \ref{lem:cycles_long_edges}, $\blocking$ is $(4+2\delta)$-competitive in this case.

For $\delta\leq 0$ however, a long edge may also be charged.
Should $C$ not contain a long edges or should the long edge in $C$ not be charged, it follows just as for $\delta> 0$ that the competitive ratio is at most $4+2\delta$.

Otherwise, let $e=(u,v)$ be the long edge in $C$ which is traversed from $u$ to $v$ when it is charged.
At the time $e$ is charged there exists a second boundary edge on $C$ which we call $e'=(u',v')$.
Furthermore, let $y$ be the position the algorithm resides at directly before moving to $u$ and traversing $e$.
Let $P$ be the path from $y$ to $C$ which is unique in $G$ (see Figure \ref{fig:unicycles_upper}). Note that $P\subseteq G\setminus C$ and in particular $|P|\leq |G\setminus C|$.
The fact that no edge of $P$ is contained in a cycle is the crucial difference to the case of cactus graphs examined later.

We show that after $\blocking(G,v)$ has been executed all edges on $C$ have been revealed.
Suppose the opposite is true and let $G_0$ be the known subgraph of $G$ after the call of $\blocking(G,v)$ has been executed. Furthermore, let $S$ be the induced subgraph of all vertices in $G_0$ that can be reached from $v$ without traversing $e$.

\begin{figure}[htbp]
\centering

\begin{tikzpicture}[scale=1]

\def\radius{2cm}

\def\drawarc#1#2#3{
	\draw (#1:\radius) node[vertex]{} arc(#1:#2:\radius) node[weight, swap, midway]{#3} node[vertex]{};
}

\def\drawthickarc#1#2{
	\draw[very thick] ++(#1:\radius) arc(#1:#2:\radius);
}

\def\drawdotarc#1#2{
	\begin{scope}[on background layer]
		\draw[continued] ++(#1:\radius) arc(#1:#2:\radius);
	\end{scope}
}

\def\drawpoint#1#2{
	\node[vertex, label={[weight]#1:#2}] at (#1:\radius){};
}

\def\drawboundarypoint#1{
	\node[boundary] at (#1:\radius){};
}

\def\drawpath#1#2#3#4#5{
	\node[vertex] at (#1:\radius){};
	\draw[continued] (#1:\radius) -- +(#1:#2);
	\draw[|-|, very thin] ($(#1:\radius) + (#1-90:#3)$) -- node[weight, swap]{#4} ++(#1:#2);
	\node[vertex, label={[weight]left:#5}] at (#1:\radius+#2){};
}

\def\drawtree#1{
	\draw[very thick] (#1:\radius) coordinate(t0) -- +(#1:1cm) coordinate(t1);
	\node[vertex] at (t0){};
	\node[vertex] at (t1){};
	\draw[very thick] (t1) -- node[weight, swap]{$e''$} +(#1+60:0.5cm) node[](t3){};
	\node[boundary] at (t3){};
	\draw[very thick] (t1) -- +(#1-60:0.8cm) coordinate(t2);
	\node[vertex] at (t2){};	
	\draw[very thick] (t2) -- +(#1-10:0.7cm) coordinate(t4);
	\node[vertex] at (t4){};
	\draw[very thick] (t2) -- +(#1-90:0.7cm) coordinate(t5);
	\node[vertex] at (t5){};

}

\drawboundarypoint{110}

\drawdotarc{110}{180}

\drawpath{150}{3cm}{0.2cm}{$P$}{$y$}

\drawpoint{180}{$u$}

\drawarc{180}{380}{$e$}

\drawpoint{380}{$v$}

\drawthickarc{20}{50}{}

\drawtree{50}

\drawthickarc{50}{75}{}

\drawboundarypoint{75}

\end{tikzpicture}

\caption{Exploring a unicyclic graph}
\label{fig:unicycles_upper}
\end{figure}

Let $e''$ be a shortest boundary edge in $S$.
The situation is shown in Figure \ref{fig:unicycles_upper} where blank nodes represent unvisited nodes and the edges of $S$ are indicated by thick lines.
Since $S$ contains a boundary edge on $C$ which is shorter than $e$, the edge $e''$ must also be shorter than $e$.
That implies that $e''$ cannot be blocked by any boundary edge outside of $S$ because the distance to such an edge is larger than $|e|>(1+\delta)|e|>(1+\delta)|e''|$.
By definition there also exists no smaller boundary edge than $e''$ in $S$ that could block it.
So $e''$ is not blocked after $\blocking(G,v)$ has been executed.
However, this leads to a contradiction since $e''$ should have been traversed during the call $\blocking(G,v)$ either after it was detected if it was not blocked then or otherwise later when it became unblocked.

So since $C$ will be fully revealed when $\blocking$ returns from $v$ to $y$, the algorithm will not traverse $e$ a second time but instead take the shorter way around the cycle.
The charge to $e$ is therefore exactly $|C|+2|P|$.
Since $e$ is a long edge that was traversed, we can apply Lemma \ref{lem:cycles_long_edges} and conclude
\begin{align*}
B_\delta &\leq (4+2\delta)|G \setminus C| + (4+2\delta)(|C|-|e|) + |C|+2|P|\\
&\leq 2(3+\delta)|G \setminus C| + (5+2\delta)(|C|-|e|) + |e|\\
&\leq 2(3+\delta)|G \setminus C| + 2\left(3+\frac{\delta^2+\delta/2}{1+\delta}\right)(|C|-|e|).
\end{align*}
On the other hand, $\mathrm{OPT} = 2|G \setminus C|+2(|C|-|e|)$.
Since $\delta \leq (\delta^2+\delta/2)/(1+\delta)$ for $\delta\leq 0$, we have established the proposed upper bound.
\end{proof}

\begin{theorem}
\sloppy On unicyclic graphs, $\blocking$ has a competitive ratio of at least $(4+2\delta)$ for $\delta > 0$ and of at least $\max\left(4+2\delta, 3+\frac{\delta^2+\delta/2}{1+\delta} \right)$ for $\delta \leq 0$.
\end{theorem}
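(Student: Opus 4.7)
My plan is to construct, for each of the two lower bounds, a parameterized family of unicyclic graphs on which the ratio of $\blocking$'s tour length to the optimum approaches the stated value.

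For the main $(4+2\delta)$ bound I build the graph around a \emph{blocking chain gadget}: a pendant edge $e=(v,w)$ of length $1$ together with a chain $v=p_0-p_1-\cdots-p_n$ whose $n$ edges have lengths less than $1$ and sum to exactly $1+\delta$. Each prefix of the chain provides a boundary edge shorter than $|e|$ whose unvisited endpoint lies at distance at most $(1+\delta)\cdot 1$ from $v$, so $e$ stays blocked throughout the DFS-like descent of the chain. When the recursion reaches $p_n$, the pendant $e$ is still a boundary edge, is now unblocked, and satisfies the ``previously blocked by $(u',y)$'' clause of the while loop at $y=p_n$ (with $(u',y)=(p_{n-1},p_n)$). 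Hence $\blocking$ traverses $e$ from $p_n$ and all three contributions of Lemma~\ref{lem:blocking_charge} are saturated: line~2 costs $1+\delta$, line~3 costs $1$, line~5 costs $2+\delta$, giving a charge of exactly $4+2\delta$ to $e$ against an $\mathrm{OPT}$-cost of $2$ for the same edge. I then attach many such gadgets to a cycle carrying a long edge, so that $\mathrm{OPT}$ skips that long edge by Lemma~\ref{lem:unicycles_opt}, shrinking the denominator; in the limit the gadget charges dominate and the ratio is driven toward $4+2\delta$.

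For the second bound $3+\frac{\delta^2+\delta/2}{1+\delta}$ (active only for $\delta\leq 0$) I would instead saturate every step of the second case in the proof of Theorem~\ref{thm:unicycles_blocking}. Take a cycle $C$ whose long edge $e$ has length $|e|=(|C|-|e|)/(1+\delta)-\eta$ for small $\eta>0$, so that $e$ is long but just barely unblocked per Lemma~\ref{lem:cycles_long_edges}; attach a tree path $P$ from the start vertex to $C$ whose length equals $|G\setminus C|$; and orient the edge lengths so that when $\blocking$ first traverses $e$, its position $y$ is the tip of $P$ opposite to $C$. The charge to $e$ is then exactly $|C|+2|P|$, and the chain of estimates $B_\delta \le 2(3+\delta)|G\setminus C| + 2(3+(\delta^2+\delta/2)/(1+\delta))(|C|-|e|)$ from Theorem~\ref{thm:unicycles_blocking} becomes tight. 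Taking $\eta\to 0$ while letting $|G\setminus C|$ be negligible compared to $|C|-|e|$ then pushes the ratio to $3+\frac{\delta^2+\delta/2}{1+\delta}$, since the second summand becomes the binding one.

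The main obstacle in both constructions is the operational verification of $\blocking$'s trace. I must check, stage by stage, that the blocking condition holds for the target edge (the pendant $e$, respectively the cycle long edge) at each recursive invocation, so it is not prematurely explored from some closer vertex; and that the shortest \emph{known} paths used in lines~2 and~5 do not pick up an unintended shortcut through the partially revealed graph (in particular through the cycle), which would lower the charge below the $(4+2\delta)|e|$ or $|C|+2|P|$ we are aiming for. Tuning the edge-length scales of the chain, the cycle, and the tree path, and taking appropriate limits as the gadget count grows and $\eta\to 0$, is what keeps all these requirements simultaneously satisfiable; once they are, summing per-edge charges and comparing with $\mathrm{OPT}$ as computed via Lemma~\ref{lem:unicycles_opt} yields the claimed asymptotic ratios.
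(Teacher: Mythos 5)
Your first construction has a fatal arithmetic flaw: a pendant edge $e$ is traversed \emph{twice} by the optimal tour (Lemma~\ref{lem:unicycles_opt}), so even if $\blocking$ charges $e$ the full $(4+2\delta)|e|$, the ratio contributed by that edge is only $(4+2\delta)/2=2+\delta$. Worse, your blocking chain must have total length exactly $(1+\delta)|e|$, which is comparable to $|e|$ itself, so OPT pays $2|e|+2(1+\delta)|e|$ per gadget while the algorithm pays at most about $(6+4\delta)|e|$; the per-gadget ratio is then roughly $(3+2\delta)/(2+\delta)<2$. Replicating the gadget or attaching it to a cycle with a long edge does not help, since it scales numerator and denominator identically. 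The paper's construction avoids both problems at once: the heavily charged edges $l_1,\dots,l_m$ are placed \emph{on the cycle} (two ``spiked paths'' joined into a single cycle with no long edge), so OPT traverses each of them only \emph{once}; and the blockers are spikes of constant length $1/k$ whose distance budget $(1+\delta)|l_i|$ is supplied by the already-traversed path rather than by a dedicated chain, with $|l_i|=(i+\sum_{j<i}|l_j|)/(1+\delta)$ growing so that $\sum|l_i|\in\Omega(m^2)$ dominates the $\mathcal{O}(m)$ total length of all spikes and small edges. Both ingredients are essential and both are missing from your gadget.

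Your second construction is closer in spirit to the paper's (a single spiked path closed into a cycle, with the last edge $l_m$ long and barely unblocked, charged only $(2+\delta)|l_m|$ because of the shortcut around the cycle), but it is under-specified in exactly the place that matters: you only arrange for the long edge's charge to be tight, whereas the bound $3+\frac{\delta^2+\delta/2}{1+\delta}$ also requires the \emph{other} cycle edges to be simultaneously charged at the full rate $4+2\delta$ while OPT pays for them only twice as the non-long part of the cycle. That again needs the spiked-path mechanism, which your write-up defers to ``tuning'' without exhibiting. As it stands, neither bound is established.
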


\begin{proof}
Consider the graph in Figure \ref{fig:unicycles_spiked_path} which we call a \emph{spiked path} $SP_m$ for some $m\in\mathbb{N}$.
Choose $k=\lceil 1+\delta\rceil  +1$ which implies $k>1+\delta$. The graph contains a path from the \emph{entry node} (the leftmost node in the figure) to the \emph{exit node} (the rightmost node). This path is made up of a unit length entry edge followed by $mk$ edges of length $1/k$ and finally $m$ edges $l_1,\ldots,l_m$, where $|l_i|=(i+\sum_{j=1}^{i-1} |l_j|)/(1+\delta)$.
Additionally, the spikes $s_1, \ldots, s_m$, each of length $1/k$, are attached to the path as shown in Figure \ref{fig:unicycles_spiked_path}. The spike $s_1$ is at a distance of $1-1/k$ from $l_1$ while the distance between $s_i$ and $s_{i-1}$ is 1 for $i\geq 2$.

\begin{figure}[htbp]
\centering

\begin{tikzpicture}[scale=\textwidth/9cm]

\def\kfrac{0.4cm}

\draw (0,0) node[vertex]{} -- node[weight, swap]{1} +(1,0) node[vertex](v1){};
\draw (v1) -- node[weight, swap]{$\frac{1}{k}$} +(\kfrac,0) node[vertex](v2){};
\draw (v2) -- node[weight, swap]{$s_m$} +(0,\kfrac) node[vertex]{};
\draw[multiple] (v2) -- node[weight, swap]{$1$} +(1,0) coordinate(v3);
\node[vertex] at (v3){};
\draw (v3) -- node[weight, swap]{$s_{m-1}$} +(0,\kfrac) node[vertex]{};

\draw[continued] (v3) -- +(1.2,0) coordinate(v4);
\node[vertex] at (v4){};

\draw (v4) -- node[weight, swap]{$s_2$} +(0,\kfrac) node[vertex]{};
\draw[multiple] (v4) -- node[weight, swap]{$1$} +(1,0) coordinate(v5);
\node[vertex] at (v5){};
\draw (v5) -- node[weight, swap]{$s_1$} +(0,\kfrac) node[vertex]{};
\draw[multiple] (v5) -- node[weight, swap, font=\small]{$1-\frac{1}{k}$} +(0.9,0) coordinate(v6);
\node[vertex] at (v6){};
\draw (v6) node[vertex]{} -- node[weight, swap]{$l_1$} +(1,0) node[vertex](v7){};
\draw[continued] (v7) -- +(0.9,0) coordinate(v8);
\node[vertex] at (v8){};
\draw (v8) -- node[weight, swap]{$l_m$} +(1.4,0) node[vertex]{};
\end{tikzpicture}

\caption{The spiked path $SP_m$ (Dashed lines indicate paths consisting of edges of length $1/k$.)}
\label{fig:unicycles_spiked_path}
\end{figure}
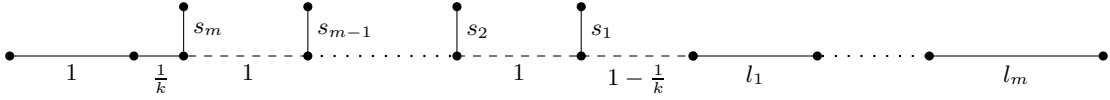

The spiked path will later be a subgraph of a larger graph and will only be connected to other parts of the graph by edges incident to the entry or exit node.
Assume that $\blocking$ enters $SP_m$ by traversing the entry edge at a point in time when the exit node is still unseen.
We will prove that the edges $l_1,\ldots,l_m$ are each charged $4+2\delta$ times their length and that their total length dominates the total length of the remaining edges in the graph for large $m$.

Suppose that $\blocking$, after traversing the entry edge, continues along the path up to $l_1$ without exploring any of the spikes on the way.
The length of $l_1$ has been chosen such that the unexplored tip of $s_1$ is now at a distance of exactly $(1+\delta)|l_1|$, implying that $l_1$ is blocked by $s_1$. 
So the algorithm backtracks and traverses $s_1$.
Since this unblocks $l_1$, the algorithm now walks to $l_1$ and traverses it, which together costs $(2+\delta)|l_1|$.
Since $l_2$ is now blocked by $s_2$ the algorithm returns to the tip of $s_1$, which again costs $(2+\delta)|l_1|$.
Next it will traverse $s_2$ which unblocks $l_2$ and the process repeats.
So for $i=1,\ldots,m-1$ the edge $l_i$ is charged exactly $(4+2\delta)|l_i|$.

For $l_m$ this is also true if a shortest path from the exit node to the entry node when considering the whole graph is contained in the subgraph $SP_m$.
This ensures that the return from $l_m$ to the tip of $s_m$ costs $(2+\delta)|l_m|$.
In this case we say that there is \emph{no shortcut outside $SP_m$}.
Otherwise, the edge $l_m$ is nevertheless charged at least $(2+\delta)|l_m|$.

It can be easily verified that no edges in $SP_m$ apart from the entry edge can be blocked by a boundary edge outside $SP_m$.
Finally, the fact that $|l_i|\geq i/(1+\delta)$ for $i=1,\ldots,m$ implies that $\sum_{i=1}^{m} |l_i| \in \Omega(m^2)$.
Meanwhile, the lengths of all other edges of $SP_m$ add up to $1+m(1+1/k) \in \mathcal{O}(m)$.

For both the cases $\delta> 0$ and $\delta \leq 0$ consider the graph made up of two spiked paths $SP_m^{(1)}$ and $SP_m^{(2)}$ which are connected as shown in the left of Figure \ref{fig:unicycles_lower}. The arrows indicate the direction from entry to exit node.

\begin{figure}[htbp]
\centering

\begin{tikzpicture}
\node[vertex] at (0,0) {};
\draw (0,0) -- node[weight]{1} (0.6, 0.8) node[vertex](x2){};
\draw (0,0) -- node[weight, swap]{1} (0.6, -0.8) node[vertex](e1){};
\node[weight, below] at (e1.south) {$s$};

\node[vertex] at (5,0) {};
\draw (5,0) -- node[weight, swap]{1} (4.4, 0.8) node[vertex](e2){};
\draw (5,0) -- node[weight]{1} (4.4, -0.8) node[vertex](x1){};

\draw[arrow] (e1.east) -- node[weight]{$SP_m^{(1)}$} (x1.west);
\draw[arrow] (e2.west) -- node[weight, swap]{$SP_m^{(2)}$} (x2.east);

\node[vertex] at (7,0) {};
\draw (7,0) -- node[weight]{1} (7.6, 0.8) node[vertex](x){};
\draw (7,0) -- node[weight, swap]{1} (7.6, -0.8) node[vertex](e){};
\node[weight, below] at (e.south) {$s$};

\draw[arrow] (e.east) to[bend right=90] node[weight, swap]{$SP_m$} (x.east);
\end{tikzpicture}

\caption{Lower bound constructions for unicyclic graphs}
\label{fig:unicycles_lower}
\end{figure}
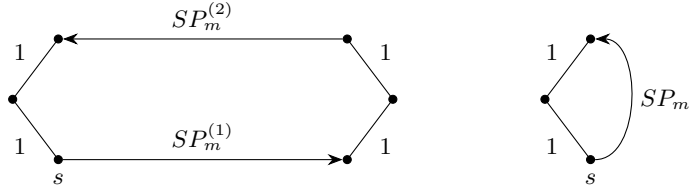

Assume $\blocking$ chooses to enter $SP_m^{(1)}$ through its entry edge in its first step. It will then completely explore $SP_m^{(1)}$ before traversing the two unit length edges and entering $SP_m^{(2)}$ through its entry edge. Subsequently, it completely explores $SP_m^{(2)}$.
Hence, $B_\delta \geq 2(4+2\delta)\sum_{i=1}^{m} |l_i|$ since there are not shortcuts outside $SP_m^{(1)}$ or $SP_m^{(2)}$.
On the other hand, $OPT = 2\sum_{i=1}^{m} |l_i| + \mathcal{O}(m)$.
This implies that the ratio $B_\delta / OPT$ comes arbitrarily close to $4+2\delta$ from below when we choose $m$ sufficiently large.

In order to prove the second part of the lower bound for $\delta \leq 0$ consider the spiked path $SP_m$ connected into a cycle as shown in the right of Figure \ref{fig:unicycles_lower}.
Assume that $\blocking$ chooses to traverse the entry edge of $SP_m$ as its first step and then consequently explores the whole of $SP_m$.
Note that in this graph there is a shortcut outside $SP_m$, implying that $l_m$ will only be charged at least $(2+\delta)|l_m|$.
Hence,
\begin{equation*}
B_\delta \geq (4+2\delta)\sum_{i=1}^{m-1} |l_i| + (2+\delta)|l_m| \geq 2\left(3 +\frac{\delta^2 + \delta /2}{1+\delta}\right) \sum_{i=1}^{m-1} |l_i|
\end{equation*}
which follows from $|l_m|>\sum_{i=1}^{m-1} |l_i|/(1+\delta)$.
Finally, $OPT = 2\sum_{i=1}^{m-1} |l_i| + \mathcal{O}(m)$ and $\sum_{i=1}^{m-1} |l_i| \in \Omega(m^2)$ prove the desired lower bound.
\end{proof}

\begin{theorem}\label{thm:cactus_blocking}
On cactus graphs, $\blocking$ is $(4+2\delta)$-competitive for $\delta > 0$ and $\left(4+ \frac{\delta^2 + \delta/2}{1+\delta} \right)$-competitive for $\delta \leq 0$.
In particular, the algorithm is $\frac{5}{2}+\sqrt{2}\approx 3.91$-competitive for $\delta=\frac{1}{\sqrt{2}}-1\approx-0.29$.
\end{theorem}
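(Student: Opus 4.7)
The plan is to follow the structure of Theorem~\ref{thm:unicycles_blocking} cycle-by-cycle. The key departure from the unicyclic case is that when a long edge $l$ in some cycle $C$ is charged, the path $P$ from the algorithm's position $y$ to $C$ can now traverse edges of other cycles, so the unicyclic bound $|P|\le|G\setminus C|$ is unavailable. I replace it by a cycle-local bound $|P|\le|C|-|l|$; this is precisely what produces the additional $+1$ in the cactus ratio compared with the unicyclic one. For $\delta>0$, Lemma~\ref{lem:cycles_long_edges} says long edges are never charged, so Lemma~\ref{lem:blocking_charge} together with the characterisation of $\mathrm{OPT}$ on a cactus (Lemma~\ref{lem:unicycles_opt}) yields $B_\delta\le(4+2\delta)\,\mathrm{OPT}$ immediately: tree edges and non-long cycle edges of a cycle with a long edge are charged at most $(4+2\delta)|e|$ against $\mathrm{OPT}$'s $2|e|$ (ratio $2+\delta$), while in a cycle with no long edge the total charge is $(4+2\delta)|C|$ against $\mathrm{OPT}$'s $|C|$ (ratio $4+2\delta$).

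For $\delta\le 0$ I focus on a cycle $C$ whose long edge $l=(u,v)$ actually gets charged (every other cycle being handled exactly as for $\delta>0$). I first reprove verbatim the revelation step of the unicyclic proof: letting $S$ be the induced subgraph on the vertices reachable from $v$ in the known graph without using $l$, if $C$ were not fully revealed after $\blocking(G,v)$ the shortest boundary edge $e''$ in $S$ would satisfy $|e''|<|l|$, yet no boundary edge outside $S$ could block it because every known path from $S$ to its complement must cross $l$, hence has length at least $|l|$, which exceeds $(1+\delta)|e''|$ since $1+\delta\le 1$. Nothing in this step exploits that $G$ has only one cycle. Hence on the return walk the algorithm takes the short arc of $C$ and, just as in Theorem~\ref{thm:unicycles_blocking}, the total charge to $l$ is at most $|C|+2|P|$.

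The key new step is to bound $|P|\le|C|-|l|$ using only cycle-local information. Either $y=u$, so $|P|=0$; or the while-loop condition selecting $l$ is that $l$ had been previously blocked by some edge $(u',y)$, and the blocking condition then gives $d(y,u)\le(1+\delta)|l|$. Since $l$ is a long edge that was eventually unblocked and traversed, Lemma~\ref{lem:cycles_long_edges} gives $|l|<(|C|-|l|)/(1+\delta)$, so $|P|\le d(y,u)\le(1+\delta)|l|<|C|-|l|$. Feeding this into the total charge on the edges of $C$ and applying Lemma~\ref{lem:cycles_long_edges} once more to the remaining $|l|$ term produces
\begin{align*}
(4+2\delta)(|C|-|l|)+|C|+2|P| &\le (7+2\delta)(|C|-|l|)+|l|\\
&\le (|C|-|l|)\Bigl((7+2\delta)+\tfrac{1}{1+\delta}\Bigr) = 2R\,(|C|-|l|),
\end{align*}
with $R=4+\tfrac{\delta^2+\delta/2}{1+\delta}$, where the identity $(7+2\delta)+1/(1+\delta)=2R$ is an elementary calculation. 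Since $\mathrm{OPT}$ contributes $2(|C|-|l|)$ from $C$ and since $4+2\delta\le 4\le R$ for $\delta\le 0$, summing over $G$ gives $B_\delta\le R\cdot\mathrm{OPT}$. A short optimisation of $R(\delta)$ pins its minimum at $\delta=1/\sqrt{2}-1$ with value $5/2+\sqrt{2}$.

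The hardest part is the loss of the global bound $|P|\le|G\setminus C|$: in a cactus $P$ may use arcs of other cycles, and if several charged long edges shared overlapping paths $P_i$ any global accounting of $\sum|P_i|$ would be delicate. The trick is to sidestep all cross-cycle accounting by making the $|P|$ bound purely local through the blocking condition together with Lemma~\ref{lem:cycles_long_edges}; this localisation is exactly what produces the $+1$ gap between the cactus and the unicyclic ratio.
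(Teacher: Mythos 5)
Your proposal is essentially the paper's own proof: for $\delta>0$ the unicyclic argument carries over verbatim, and for $\delta\le 0$ the paper likewise bounds the approach and return walks for a charged long edge $e$ of a cycle $C$ via the blocking condition by $(1+\delta)|e|$ each (giving a total charge of at most $|C|+(2+2\delta)|e|$ to $e$), applies Lemma~\ref{lem:cycles_long_edges} to convert the leftover $|e|$ terms into $(|C|-|e|)$ terms, and then compares per cycle against $\mathrm{OPT}$'s contribution $2(|C|-|e|)$; your $\bigl(7+2\delta+\tfrac{1}{1+\delta}\bigr)(|C|-|l|)$ is algebraically identical to the paper's $\bigl(5+2\delta+\tfrac{3+2\delta}{1+\delta}\bigr)(|C|-|e|)=2R(|C|-|e|)$. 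One small slip: your chain ``$4+2\delta\le 4\le R$'' is false for $-\tfrac12<\delta\le 0$, since there $\tfrac{\delta(\delta+1/2)}{1+\delta}<0$ and so $R<4$ (indeed $R=\tfrac52+\sqrt2<4$ at the optimizing $\delta=\tfrac{1}{\sqrt2}-1$); the inequality you actually need, $4+2\delta\le R$, does hold for all $-1<\delta\le 0$ because it is equivalent to $\delta(\delta+\tfrac32)\le 0$, which is how the paper closes the comparison between tree-edge charges and cycle charges.
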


\begin{proof}
For $\delta> 0$ the proof works just as for unicyclic graphs. 
Let $\delta \leq 0$ and consider a cycle $C$ in a cactus graph.
Assume $C$ contains a long edge $e=(u,v)$ that is traversed from $u$ to $v$ when it is charged.
Analogously to unicyclic graphs we know that after $\blocking(G,v)$ has been executed the cycle $C$ is fully revealed.
It is then possible to traverse $C-e$ instead of $e$ when returning from $v$ to the previously explored vertex.
Hence, the edge $e$ is charged at most $(1+\delta)|e|+|e|+|C|-|e|+(1+\delta)|e|$.
Let $B_\delta(C)$ denote the sum of charges made to edges of $C$. Using Lemma \ref{lem:cycles_long_edges} we conclude
\begin{align*}
B_\delta(C) &\leq (4+2\delta)(|C|-|e|) + (2+2\delta)|e|+|C| \\
&\leq (5+2\delta)(|C|-|e|) + (3+2\delta)|e|\\
&\leq 2\left(4+\frac{\delta^2+\delta/2}{1+\delta}\right)(|C|-|e|).
\end{align*}

Let $e_1,\ldots, e_k$ be the long edges in $G$ which are charged and let $C_1,\ldots, C_k$ be the cycles they are contained in, respectively.
Let $G'$ be the graph $G$ excluding all edges in $C_1,\ldots, C_k$ and excluding all other long edges which are not charged.
Then $\mathrm{OPT}\geq |G'|+\sum_{i=1}^k 2(|C_i|-|e_i|)$ because according to Lemma \ref{lem:unicycles_opt} all edges of the graph which are not long are traversed at least once by the optimal tour.
Furthermore, the bound on $B_\delta(C)$ yields
\begin{equation*}
B_\delta \leq (4+2\delta)|G'|+\left(4+\frac{\delta^2+\delta/2}{1+\delta}\right)\sum_{i=1}^k 2(|C_i|-|e_i|).
\end{equation*}
This proves the proposed competitiveness since $2\delta \leq \frac{\delta^2+\delta/2}{1+\delta}$ for $\delta\leq 0$.

\end{proof}

\begin{theorem}\label{thm:planar_neg}
On planar graphs, the competitive ratio of $\blocking$ is in $\Omega(n)$ for $\delta\leq 0$.
\end{theorem}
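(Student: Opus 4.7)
The plan is to exhibit, for each fixed $\delta\le 0$, a family of planar graphs $(G_n)_{n\in\mathbb{N}}$ on $n$ vertices on which an adversarial execution of $\blocking$ produces a tour of length $\Omega(n)\cdot \mathrm{OPT}(G_n)$. The decisive observation is that Lemma~\ref{lem:cycles_long_edges} no longer forces long edges in cycles to be blocked once $\delta\le 0$: for $\delta<0$ certain long cycle edges can remain unblocked, and the adversary's tie-breaking can steer $\blocking$ onto such an edge while OPT skips it via the rest of the cycle.

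The cleanest case is $\delta\le -1$, where the blocking condition is vacuous (since $(1+\delta)|e|\le 0$ while $d(u,v')> 0$), so $\blocking$ reduces to a form of DFS that backtracks along shortest known paths. I would take $G_n$ to be the cycle on $n$ vertices whose edges are all of unit weight except one heavy edge $e$ of weight $L:=n^2$, with the start vertex an endpoint of $e$. Since $e$ is not blocked, an adversarial execution traverses $e$ first, walks depth-first along the short arc, and finally returns along the short arc (of length $n-1<L$, so the direct edge $e$ is not used again). A direct trace, in the spirit of those carried out in the proof of Theorem~\ref{thm:unicycles_blocking}, gives $\blocking(G_n)=L+\Theta(n)$ while $\mathrm{OPT}(G_n)=2(n-1)$, so the ratio is $\Theta(n)$.

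For $\delta\in(-1,0]$ the bare cycle no longer works because the unit edge at the start vertex blocks $e$: $d(v_0,v_1)=1\le (1+\delta)L$ as soon as $L$ is large. The plan is to embed the heavy edge $e$ into a small planar gadget that (a) has no shorter boundary edge within known-graph distance $(1+\delta)|e|$ of the visited endpoint of $e$, so that $e$ stays unblocked when $\blocking$ considers it, and (b) still admits a cheap route for OPT that avoids $e$. The simplest realisation is to replace the start vertex by a short ``buffer'' in which all edges incident to $e$'s visited endpoint have length at least $|e|$ (so none of them qualifies as a shorter blocker), while the cheap alternative that OPT uses is placed outside the ball of radius $(1+\delta)|e|$ around that endpoint. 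Chaining $\Omega(n)$ such gadgets in planar fashion, with the overall graph still admitting $\mathrm{OPT}=\Theta(n)$, will yield the desired $\Omega(n)$ competitive ratio.

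The hard part will be making this construction carry through uniformly as $\delta$ approaches $0$ from below: the required radius $(1+\delta)|e|$ then approaches $|e|$ itself, so the buffer must span essentially the whole diameter of the gadget while still leaving OPT a cheap planar alternative. A secondary obstacle will be verifying that no iteration of the inner while loop in Algorithm~\ref{algo:blocking} re-activates a heavy edge through the ``previously blocked by $(u',y)$'' rule in a way that lets $\blocking$ short-circuit the bad tour; this amounts to checking that throughout the adversarial execution the relevant shorter boundary edges remain present, which is where the planar structure of the chained gadgets is used.
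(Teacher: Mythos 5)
Your construction for $\delta\le -1$ (a cycle with one heavy edge, started at an endpoint of that edge) is correct and complete: the blocking condition is vacuous there, the heavy edge is traversed once for cost $L=n^2$, and $\mathrm{OPT}=2(n-1)$, giving ratio $\Theta(n)$. The problem is that this is the easy, already-DFS regime. For $\delta\in(-1,0]$ — the range that actually carries the content of the theorem, since this is the newly introduced parameter range the statement is meant to rule out on planar graphs — you do not give a construction. You give a plan (``embed the heavy edge into a buffer gadget, chain $\Omega(n)$ of them'') and then explicitly flag two unresolved difficulties: keeping the shorter boundary edges outside the ball of radius $(1+\delta)|e|$ uniformly as $\delta\to 0^-$, and ensuring the ``previously blocked by $(u',y)$'' re-activation rule does not rescue the algorithm. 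As written, the proposal is an honest research sketch for the main case, not a proof, so there is a genuine gap.

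For comparison, the paper closes exactly the gap you identify with a single hub rather than a chain of gadgets: a path of $m$ unit edges from $s$ to a vertex $p$, plus $m$ parallel $s$--$p$ paths each consisting of two unit edges (from $s$) followed by one edge of length $m$ (into $p$). After walking the unit path to $p$, every length-$m$ edge at $p$ is a boundary edge, and the only shorter boundary edges have their unexplored endpoints at distance exactly $m+1>(1+\delta)m$ from $p$ — this single inequality holds simultaneously for all $\delta\le 0$, which is precisely the uniformity you were worried about. The algorithm then pays about $2m$ per heavy edge, i.e.\ $B_\delta\ge 2m^2$, while $\mathrm{OPT}=6m$ because each heavy edge lies on a cycle of length $2m+2$ and is not a long edge, so OPT never uses it. Note also that your proposed buffer condition (``all edges incident to $e$'s visited endpoint have length at least $|e|$'') is neither necessary nor sufficient: blocking is triggered by any shorter boundary edge within known-graph distance $(1+\delta)|e|$ of the visited endpoint, not only by incident edges, and the paper's construction instead arranges the distances so that the shorter boundary edges sit just barely outside the critical radius. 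If you want to salvage your write-up, replacing the gadget-chaining sketch by this hub construction (or an equivalent concrete one) is required before the $\delta\in(-1,0]$ case can be considered proved.
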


\begin{proof}
Let $m\in\mathbb{N}$ and consider the graph in Figure \ref{fig:planar_blocking_linear}.
It contains a path of $m$ unit length edges between the start node $s$ and the node $p$.
Additionally, there are further $m$ paths connecting $s$ and $p$ each of which consists of two unit length edges and one edge of length $m$ in this order.
So the graph contains $n=3m+1$ vertices in total.

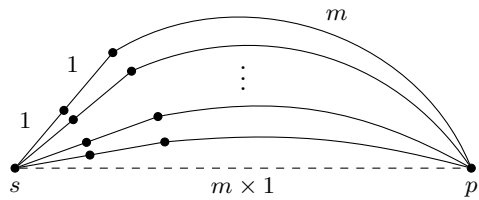
\begin{figure}[!htbp]
\centering

\begin{tikzpicture}[scale=1]

\node[vertex, label={[weight]below:$s$}] (s) at (0,0){};

\node[vertex, label={[weight]below:$p$}] (p) at (6,0){};

\draw[multiple] (s) -- node[weight, swap]{$m\times 1$} (p);

\foreach \bend in {10, 20, 40}{
	\draw (s) -- (\bend:1) node[vertex]{};
	\draw (\bend:1) -- (\bend:2) node[vertex]{};
	\draw (\bend:2) to[bend left=\bend] (p);
}
\foreach \bend in {50}{
	\draw (s) -- node[weight]{$1$} (\bend:1) node[vertex]{};
	\draw (\bend:1) -- node[weight]{$1$} (\bend:2) node[vertex]{};
	\draw (\bend:2) to[bend left=\bend] node[weight]{$m$} (p);
}

\node at (3,1.3) {\vdots};

\end{tikzpicture}
\caption{Planar lower bound example for $\delta\leq 0$}
\label{fig:planar_blocking_linear}
\end{figure}

Suppose $\blocking$ begins its exploration by traversing the path of unit length edges from $s$ to $p$.
When it resides at $p$, all $m$ edges of length $m$ are unblocked since the distance from $p$ to the unexplored vertex of any unit length boundary edges is $m+1 > (1+\delta)m$.
So the algorithm will successively traverse each edge of length $m$ as well as the following unit length edge on the path.
This implies $B_\delta\geq 2m^2$.
On the other hand, the length of the optimal tour is $6m$.
Thus,
\begin{equation*}
\frac{B_\delta}{OPT}\geq \frac{1}{3} m \geq \frac{1}{12}n.
\end{equation*}

\end{proof}

\section*{Acknowledgements}
I would like to thank Prof.\ Dr.\ Susanne Albers for suggesting the topic of and supervising the master's thesis, the main findings of which are presented in this paper.
Furthermore, I would like to thank the anonymous reviewers for their helpful comments and suggestions, in particular, for pointing out a simpler graph construction for the lower bound on trees.

\bibliographystyle{abbrv}
\bibliography{master_article2}

\begin{thebibliography}{10}

\bibitem{albers}
S.~Albers and M.~R. Henzinger.
\newblock Exploring unknown environments.
\newblock {\em SIAM J. Comput.}, 29(4):1164--1188, Feb. 2000.

\bibitem{berman}
P.~Berman.
\newblock On-line searching and navigation.
\newblock In {\em Developments from a June 1996 Seminar on Online Algorithms:
  The State of the Art}, pages 232--241, Berlin, Heidelberg, 1998.
  Springer-Verlag.

\bibitem{birx}
A.~Birx, Y.~Disser, A.~V. Hopp, and C.~Karousatou.
\newblock Improved lower bound for competitive graph exploration, 2020.

\bibitem{brandt}
S.~Brandt, K.-T. Foerster, J.~Maurer, and R.~Wattenhofer.
\newblock Online graph exploration on a restricted graph class: Optimal
  solutions for tadpole graphs.
\newblock {\em Theoretical Computer Science}, 839:176 -- 185, 2020.

\bibitem{deng}
X.~Deng and C.~H. Papadimitriou.
\newblock Exploring an unknown graph.
\newblock In {\em Proceedings of the 31st Annual Symposium on Foundations of
  Computer Science}, SFCS '90, pages 355--361 vol. 1, Washington, DC, USA,
  1990. IEEE Computer Society.

\bibitem{dobrev}
S.~Dobrev, R.~Kr{\'a}lovi{\v{c}}, and E.~Markou.
\newblock Online graph exploration with advice.
\newblock In G.~Even and M.~M. Halld{\'o}rsson, editors, {\em Structural
  Information and Communication Complexity}, pages 267--278, Berlin,
  Heidelberg, 2012. Springer Berlin Heidelberg.

\bibitem{fleischer}
R.~Fleischer and G.~Trippen.
\newblock Exploring an unknown graph efficiently.
\newblock In G.~S. Brodal and S.~Leonardi, editors, {\em Algorithms -- ESA
  2005}, pages 11--22, Berlin, Heidelberg, 2005. Springer Berlin Heidelberg.

\bibitem{foerster}
K.-T. F{\"o}rster and R.~Wattenhofer.
\newblock Directed graph exploration.
\newblock In R.~Baldoni, P.~Flocchini, and R.~Binoy, editors, {\em Principles
  of Distributed Systems}, pages 151--165, Berlin, Heidelberg, 2012. Springer
  Berlin Heidelberg.

\bibitem{wattenhofer}
M.~Herlihy, S.~Tirthapura, and R.~Wattenhofer.
\newblock Competitive concurrent distributed queuing.
\newblock In {\em Proceedings of the Twentieth Annual ACM Symposium on
  Principles of Distributed Computing}, PODC ’01, page 127–133, New York,
  NY, USA, 2001. Association for Computing Machinery.

\bibitem{hurkens}
C.~Hurkens and G.~Woeginger.
\newblock On the nearest neighbor rule for the traveling salesman problem.
\newblock {\em Oper. Res. Lett.}, 32:1--4, 01 2004.

\bibitem{kaly}
B.~Kalyanasundaram and K.~Pruhs.
\newblock Constructing competitive tours from local information.
\newblock {\em Theoretical Computer Science}, 130(1):125--138, 1994.

\bibitem{megow}
N.~Megow, K.~Mehlhorn, and P.~Schweitzer.
\newblock Online graph exploration: New results on old and new algorithms.
\newblock {\em Theoretical Computer Science}, 463:62--72, 2012.

\bibitem{miyazaki}
S.~Miyazaki, N.~Morimoto, and Y.~Okabe.
\newblock The online graph exploration problem on restricted graphs.
\newblock {\em IEICE Transactions}, 92-D(9):1620--1627, 2009.

\bibitem{papa}
C.~H. Papadimitriou and M.~Yannakakis.
\newblock Shortest paths without a map.
\newblock {\em Theoretical Computer Science}, 84(1):127 -- 150, 1991.

\bibitem{rosenkrantz}
D.~J. Rosenkrantz, R.~E. Stearns, and P.~M. Levis.
\newblock An analysis of several heuristicss for the traveling salesman
  problem.
\newblock {\em SIAM Journal on Computing}, 6(3):563--581, 1977.

\end{thebibliography}

\end{document}